\newtheorem{theorem}{Theorem}[section]
\newtheorem{remark}{Remark}[section]
\newcommand{\be}{\begin{equation}}
\newcommand{\ee}{\end{equation}}
\newcommand{\bea}{\begin{eqnarray}}
\newcommand{\eea}{\end{eqnarray}}
\newcommand{\eeas}{\end{eqnarray*}}
\newcommand{\beas}{\begin{eqnarray*}}
\begin{document}

\title[A CONFORMALLY FLAT GENERALIZED RICCI RECURRENT SPACETIME...]{A CONFORMALLY FLAT GENERALIZED RICCI RECURRENT SPACETIME IN $F(R)$-GRAVITY}
\author{Avik De \and Tee-How Loo \and Raja Solanki \and P.K. Sahoo}
\address{A. De\\
Department of Mathematical and Actuarial Sciences\\
Universiti Tunku Abdul Rahman\\
Jalan Sungai Long\\
43000 Cheras\\
Malaysia}
\email{de.math@gmail.com}
\address{T. H. Loo\\
Institute of Mathematical Sciences\\
University of Malaya\\
50603 Kuala Lumpur\\
Malaysia}
\email{looth@um.edu.my}
\address{Raja Solanki\\
Department of Mathematics, Birla Institute of Technology and Science-Pilani, Hyderabad Campus, Hyderabad 500078\\ 
India}
\email{rajasolanki8268@gmail.com}
\address{P.K. Sahoo\\
Department of Mathematics, Birla Institute of Technology and Science-Pilani, Hyderabad Campus, Hyderabad 500078\\ 
India}
\email{pksahoo@hyderabad.bits-pilani.ac.in}

\date{}

\thanks{A.D. and L.T.H. are supported by the grant FRGS/1/2019/STG06/UM/02/6. R.S. acknowledges University Grants Commission(UGC), Govt. of India, New Delhi, for awarding Junior Research Fellowship(NFOBC)(No.F.44-1/2018(SA-III)) for financial support.}

\begin{abstract}

In the present paper we study a conformally flat generalized Ricci recurrent perfect fluid spacetime with constant Ricci scalar as a solution of modified $F(R)$-gravity theory. We show that a Robertson-Walker spacetime is generalized Ricci Recurrent if and only if it is Ricci symmetric. The perfect fluid type matter is shown to have EoS $\omega=-1$. Some energy conditions are analyzed with couple of popular toy models of $F(R)$-gravity, like $F(R)= R+\alpha R^m $ where $\alpha, m$ are constants and $ F(R)=R+\beta RlnR $ where $\beta $ is constant. In harmony with the recent observational studies of accelerated expansion of the universe, both cases exhibit that the null, weak, and dominant energy conditions fulfill their requirements whereas the strong energy condition is violated.

\end{abstract}

\maketitle

\section{\textbf{Introduction}}\label{sec-1}

The stress-energy tensor $T_{ij}$ in the standard theory of gravity due to Einstein 
$$R_{ij}-\frac{R}{2}g_{ij}=\kappa^2 T_{ij},$$
is conserved. This requirement is accomplished if $T_{ij}$ is covariantly constant. It is shown that a spacetime solution of the standard theory of gravity with covariantly constant stress-energy tensor is Ricci symmetric, that is, $\nabla_iR_{jl} = 0$ \cite{chakiray}. In \cite{patterson}, Patterson generalized this concept by defining a Ricci recurrent manifold $R_n$, $(n> 2)$ whose Ricci tensor $R_{ij}\,(\neq 0)$ satisfies the condition $\nabla _{i}R_{jl}= A_iR_{jl}$, where $A_i\,(\neq 0)$ is a non-zero 1-form. With growing interest in the Ricci recurrent manifolds, in 1995 De et al. \cite{de} generalized the notion to introduce a generalized Ricci recurrent manifold $(GR)_n$ as a $n$-dimensional non-flat Riemannian manifold whose Ricci tensor satisfies the following:
\be \nabla _{i}R_{jl}= A_iR_{jl}+B_ig_{jl},\label{gr}\ee
where $A_i$ and $B_i$ are non-zero 1-forms. Obviously, if the one-form $B_i$ vanishes, a $(GR)_n$ reduces to a $R_n$.

Due to the ambiguity of the standard theory of gravity to explain the late time inflation of the universe, some researchers tried to modify the action principle of Einstein's field equations (EFE) to get more efficient gravitational field equations. One of the most popular of them are proposed by Hans Adolph Buchdahl in 1970 \cite{Hans}, obtained by considering a function $F(R)$ instead of the Ricci scalar $R$ in the Einstein-Hilbert action and somewhat popularised by Starobinsky by considering a quadratic form of the Ricci scalar in his formulation \cite{St}. There are several models or functional forms of $F(R)$ proposed in the literature, for reference see (\cite{O}, \cite{C}). The viability of $F(R)$-models were constrained by several observational data \cite{S.C.}.  It was shown that astrophysical structures like massive neutron stars which cannot be addressed by GR, that can be solved by the higher order curvature of $F(R)$ gravity, for references see (\cite{Aa}, \cite{Bb}, \cite{Cc}, \cite{Dd}, \cite{Ee}). The equations of motion of $F(R)$ gravity have higher degrees and provide considerable solutions that are different from general relativity. 

A Lorentzian manifold of signature $(-,+,+,+)$ is a $(GR)_4$ spacetime if the Ricci tensor satisfies (\ref{gr}). In \cite{avik}, the authors investigated a $(GR)_4$ type solution of EFE, a conformally flat $(GR)_4$ was shown to be a perfect fluid under different sets of conditions imposed on its geometry and several non-trivial examples were constructed. Recently a $(GR)_4$ spacetime with a Codazzi type Ricci tensor was shown to be a generalized Robertson Walker spacetime with Einstein fiber \cite{pinaki}, also the stress-energy tensor was proved to be of semisymmetric type in such a spacetime with constant $R$. Almost Pseudo-Ricci symmetric and Weakly Ricci Symmetric type spacetime solutions of $F(R)$-gravity were recently studied in \cite{aprs} and \cite{wrs}, respectively.   

A perfect fluid type stress-energy tensor $T_{ij}=pg_{ij}+(p+\rho)u_iu_j$ is considered, where $u_i$ denotes the four velocity vector of the fluid. In addition, we assume that the relation between the energy density $\rho$ and the pressure $p$ of the matter present in the universe is given by the equation $ p=\omega \rho $.

We know that the energy conditions represent paths to establish the positivity of the energy-momentum tensor. Also, they can be used to test the attractive nature of gravity, besides assigning the fundamental causal and the geodesic structure of space-time \cite{E}. The different models of $F(R)$ gravity give rise to the problem of constraining model parameters. By imposing the different energy conditions, we may have some constrains of $F(R)$ model parameters \cite{Gb}. The different energy conditions have been used to obtain solutions for a plenty of problems. For example the strong and weak energy conditions were used in the Hawking-Penrose singularity theorems (\cite{M.P.}, \cite{Jos}) and the null energy condition is required in order to prove the second law of black hole thermodynamics \cite{Mau}. The energy conditions were primarily formulated in GR \cite{H}, and later derived in $F(R)$ gravity by introducing effective pressure and energy density. In this paper, we examine the null, weak, strong and dominant energy conditions for $F(R)$-gravity models. In different gravity theories energy conditions are studied, for reference see (\cite{Sanjay}, \cite{Simran}). 

The present paper is organized as follows: in section 2 we find a necessary and sufficient condition for a Robertson-Walker spacetime to be generalized Ricci recurrent; followed by a study of conformally flat $(GR)_4$ with constant $R$ which satisfies $F(R)$-gravity equations. Several energy conditions in such a spacetime are discussed next, followed by couple of popular $F(R)$-gravity models, analysed in conformally flat $(GR)_4$. In section 5 we consider two models, one is polynomial and the other logarithmic and investigate different energy conditions and find the constraints on model parameters to satisfy requirements of energy conditions. Finally in section 6 we end up with the discussion.

\section{\textbf{Robertson-Walker spacetime as a $(GR)_4$}}\label{sec-2}
In this section we show that a spatially flat RW spacetime is generalized Ricci recurrent under certain conditions. 

In a spatially flat RW spacetime we have
\begin{align*}
ds^2=-dt^2+a^2(t)\left(dr^2+r^2d\theta^2+r^2\sin^2\theta d\phi^2\right),
\end{align*}

and
\begin{align}\label{eqn:7a}
R_{jl}=(P-Q)u_ju_l+Pg_{jl}=-Qu_ju_l+Ph_{jl}
\end{align}
where 
\begin{align}\label{eqn:2a}
P=\frac{a\ddot a+2\dot a^2}{a^2}, \quad Q=3\frac{\ddot a}{a} 
\end{align}
and  $u^i=(\partial_t)^i$ is the four-velocity of the fluid with $u^ju_j=-1$ and  
\begin{align}\label{eqn:3}
\nabla_ju_l=&\frac{\dot a}ah_{jl}.   
\end{align}
It is clear from (\ref{eqn:2a}) that 
\begin{align}\label{eqn:7b}
\nabla_iP=&-u_i\dot P, \quad \nabla_iQ=-u_i\dot Q.
\end{align} 
Taking covariant derivative on (\ref{eqn:7a}), with the help of (\ref{eqn:3})--(\ref{eqn:7b}) we obtain
\begin{align}\label{eqn:9a}
\nabla_iR_{jl} 
=&(\nabla_iP-\nabla_iQ)u_ju_l+(P-Q)\{\nabla_iu_ju_l+\nabla_iu_lu_j\}+\nabla_iPg_{jl} \notag \\
=&\dot Qu_iu_ju_l+(P-Q)\frac{\dot a}a\{h_{ij}u_l+h_{il}u_j\}-\dot Pu_ih_{jl}.	  
\end{align}

If it is also $(GR)_4$, by (\ref{gr}) and (\ref{eqn:7a}), the above equation gives
\begin{align}\label{eqn:9b}
\nabla_iR_{jl}
=& -(QA_i+B_i)u_ju_l +(PA_i+B_i)h_{jl}.
\end{align}
By comparing (\ref{eqn:9a})--(\ref{eqn:9b}), we have
\begin{align}\label{eqn:10}	
\{\dot Qu_i +QA_i+B_i)\}u_ju_l+(P-Q)\frac{\dot a}a\{u_lh_{ij}+u_jh_{il}\} \notag\\
=\{\dot Pu_i+PA_i+B_i\}h_{jl} 
\end{align}
Let $h^{jl}=g^{jl}+u^ju^l$. Then
\[
h_{ij}h^{jl}=h_i^l=\delta_i^l+u_iu^l.
\]
Transvecting (\ref{eqn:10}) with $h^{jl}$, we have
\begin{align}\label{eqn:12}\dot Pu_i+PA_i+B_i=0.\end{align}
Transvecting (\ref{eqn:10}) with $u^jh^{il}$, we have
\begin{align}\label{eqn:12b}
P-Q=0
\end{align}
or equivalently, $a\ddot a=\dot a^2$. Solving this equation gives 
\[
a(t)=Ae^{\epsilon t}
\]
where $A>0$ and $\epsilon$ are constants. 
By using this and (\ref{eqn:2a}), we have $\dot P=0$ and so 
(\ref{eqn:12}) gives
\begin{align}\label{eqn:12d}
PA_i+B_i=0.
\end{align}
By substituting all these in (\ref{eqn:9b}), we have $\nabla_iR_{jl}=0$, that is the spacetime is Ricci symmetric. Combining, we conclude:
\begin{theorem}
A spatially flat RW spacetime is a $(GR)_4$ spacetime if and only if 
 it is Ricci symmetric and the scale factor $a(t)$  is given by
\[
a(t)=Ae^{\epsilon t}
\]
where $A(>0)$ and $\epsilon$ are constants and $3\epsilon^2A_i+B_i=0$.
\end{theorem}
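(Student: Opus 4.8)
The plan is to observe that the computation displayed just before the statement already carries the ``only if'' direction essentially to completion, and then to supply the (short) converse. Assume first that the spatially flat RW spacetime is a $(GR)_4$. Matching the intrinsic expression (\ref{eqn:9a}) for $\nabla_i R_{jl}$ against the form (\ref{eqn:9b}) forced by the recurrence relation (\ref{gr}) produces the tensor identity (\ref{eqn:10}); transvecting (\ref{eqn:10}) with $h^{jl}$ gives (\ref{eqn:12}), and transvecting with $u^jh^{il}$ gives $P=Q$, i.e.\ the ODE $a\ddot a=\dot a^2$. Discarding the static branch $\dot a\equiv 0$ (for which $R_{jl}\equiv 0$, violating the definition of a $(GR)_4$), the solution is $a(t)=Ae^{\epsilon t}$ with $A>0$ and $\epsilon\neq 0$. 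Then $\dot a=\epsilon a$ and $\ddot a=\epsilon^2 a$, so (\ref{eqn:2a}) gives $P=3\epsilon^2$ and $\dot P=0$; inserting $\dot P=0$ into (\ref{eqn:12}) yields $3\epsilon^2A_i+B_i=0$, and feeding this back into (\ref{eqn:9b}) gives $\nabla_iR_{jl}=0$, the Ricci symmetry. These are exactly the three conclusions in the statement.

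For the ``if'' direction I would compute directly. If $a(t)=Ae^{\epsilon t}$ with $\epsilon\neq 0$, then $\dot a=\epsilon a$, $\ddot a=\epsilon^2 a$, so by (\ref{eqn:2a}) one gets $P=Q=3\epsilon^2$, and (\ref{eqn:7a}) collapses to $R_{jl}=3\epsilon^2 g_{jl}$; thus the spacetime is Einstein and, $\epsilon$ being constant, $\nabla_iR_{jl}=0$ holds automatically (so here the Ricci-symmetry hypothesis is actually implied by the scale-factor condition). Choosing any nonzero $1$-form $A_i$ and putting $B_i:=-3\epsilon^2 A_i$, as prescribed by the relation in the statement, one has
\[
A_iR_{jl}+B_ig_{jl}=(3\epsilon^2A_i+B_i)g_{jl}=0=\nabla_iR_{jl},
\]
so (\ref{gr}) is satisfied; and since $\epsilon\neq 0$, $B_i=-3\epsilon^2A_i$ is nonzero whenever $A_i$ is, so both recurrence $1$-forms are genuinely nonvanishing and the spacetime is a $(GR)_4$.

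I expect no serious obstacle here; the subtleties are purely bookkeeping. One should check that the two transvections of (\ref{eqn:10}) used above exhaust its content — a third independent contraction, e.g.\ against $u^ju^l$, gives $\dot Qu_i+QA_i+B_i=0$, which is consistent with (\ref{eqn:12}) once $P=Q$ (hence $\dot P=\dot Q$) is known and adds nothing new — and one must be careful to exclude the degenerate solution $a=\mathrm{const.}$ of $a\ddot a=\dot a^2$, since there the Ricci tensor vanishes identically and a $(GR)_4$ is by definition non-flat. Beyond that, the argument is just routine tensor algebra with the unit timelike field $u_i$ and the projector $h_{ij}$.
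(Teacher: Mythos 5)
Your proof is correct and the forward direction is essentially identical to the paper's own argument (matching (\ref{eqn:9a}) with (\ref{eqn:9b}), transvecting (\ref{eqn:10}) with $h^{jl}$ and $u^jh^{il}$, solving $a\ddot a=\dot a^2$, and concluding $\dot P=0$, $3\epsilon^2A_i+B_i=0$ and $\nabla_iR_{jl}=0$). Your explicit verification of the converse and your remark on discarding the static branch $\dot a\equiv 0$ are small additions the paper leaves implicit, but they do not change the approach.
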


\section{\textbf{Conformally flat $(GR)_4$ satisfying $F(R)$-gravity}}\label{sec-3}

In this section we study a conformally flat $(GR)_4$ with a constant $R$. Contraction of $j,l$ in (\ref{gr}) produces
\be \nabla_iR=A_iR+4B_i.\label{dr}\ee
Since $R$ is constant, from (\ref{dr}) we obtain, 
\be B_i=-\frac{R}{4}A_i.\label{AB}\ee
On the other hand, since the spacetime is conformally flat, we have the well-known relation
\beas\nabla_iR{jk}-\nabla_kR_{ij}=\frac{1}{6}[g_{jk}\nabla_iR-g_{ij}\nabla_kR],\eeas
which, for a constant $R$, reduces to 
\be \nabla_iR_{jk}=\nabla_kR_{ji}.\label{14}\ee
Using (\ref{gr}) and (\ref{AB}), from (\ref{14}) we get
\be 
A_iR_{jk}-A_kR_{ij}=-\frac{R}{4}(A_kg_{ij}-A_ig_{jk}).\ee
Contracting $j,k$ in the above equation we finally get 
\be A^jR_{ij}=\frac{R}{4}A_i.\label{barb}\ee

From an action term \[S=\frac{1}{\kappa^2}\int F(R) \sqrt{-g}d^4x +\int L_m\sqrt{-g}d^4x,\]
we obtain the field equations
\be 
F_R(R)R_{ij}-\frac{1}{2}F(R)g_{ij}+(g_{ij}\Box-\nabla_i\nabla_j)F_R(R)=\kappa^2T_{ij},\label{FR}	
\ee
where $\Box=\nabla^k\nabla_k$, $L_m$ is the matter Lagrangian, and 
\[T_{ij}=-\frac{2}{\sqrt{-g}}\frac{\delta(\sqrt{-g}L_m)}{\delta g^{ij}},\] . 

For a constant Ricci scalar, (\ref{FR}) reduces to:
\be R_{ij}-\frac{R}{2}g_{ij}=\frac{\kappa^2}{F_R(R)}T^{\text{eff}}_{ij},\label{fr}\ee
where 
$$T^{\text{eff}}_{ij}=T_{ij}+\frac{F(R)-RF_R(R)}{2\kappa^2}g_{ij}.$$ 
If a perfect fluid $(GR)_4$ spacetime satisfies (\ref{fr}) with the velocity vector identical to $A^i$ (assuming a time-like unit vector), we have 
\be R_{ij}=\frac{\kappa^2(p+\rho)}{F_R(R)}A_iA_j+\frac{2\kappa^2p+F(R)}{2F_R(R)}g_{ij}.\label{a}\ee

This readily gives us
\be R_{ij}A^j=\left[\frac{F(R)-2\rho \kappa^2}{2F_R(R)} \right]A_i.\label{a1}\ee
But from (\ref{barb}) we already know that $R_{ij}A^j=\frac{R}{4}A_i$. Hence, we conclude that 
\be R=\frac{2F(R)-4\rho \kappa^2}{F_R(R)}.\ee
Therefore, \be \rho=\frac{2F(R)-RF_R(R)}{4k^2},\ee
which from the trace equation of (\ref{a}) also gives us 
\be p=-\frac{2F(R)-RF_R(R)}{4k^2}.\ee
This leads to our first result of this section:
\begin{theorem}\label{pfthm}
In a $(GR)_4$ spacetime solution of $F(R)$-gravity with constant $R$, if the four-velocity vector is identical with $A^i$; then $p$ and $\rho$ are constants and satisfy the relations $p=-\frac{2F(R)-RF_R(R)}{4k^2}$ and $\rho=\frac{2F(R)-RF_R(R)}{4k^2}$. 
\end{theorem}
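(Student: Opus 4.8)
The plan is to assemble the two independent expressions for $R_{ij}A^j$ that are already available in this section and simply read off $\rho$, then use a trace to recover $p$. The whole argument is short because most of the work is done in the displayed equations preceding the statement.

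First I would recall the purely geometric input. Since the spacetime is conformally flat (the standing assumption of the section) with constant $R$, the contracted second Bianchi identity collapses to the Codazzi-type relation \eqref{14}; combining it with the defining relation \eqref{gr} of a $(GR)_4$ and with \eqref{AB}, and then contracting, yields the identity \eqref{barb}, namely $A^jR_{ij}=\frac{R}{4}A_i$. This step invokes only the constancy of $R$ and conformal flatness, not the field equations, and it exhibits $A^i$ as an eigenvector of the Ricci operator with eigenvalue $R/4$.

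Next I would bring in the dynamics. For constant $R$ the derivative terms in the $F(R)$ field equations \eqref{FR} vanish, leaving the Einstein-type equation \eqref{fr}. Inserting the perfect-fluid stress-energy tensor and the hypothesis that the fluid four-velocity is identified with the unit time-like one-form $A^i$ (so $A^jA_j=-1$) gives the explicit Ricci form \eqref{a}. Transvecting \eqref{a} with $A^j$ and using $A^jA_j=-1$ produces the second expression \eqref{a1}, $R_{ij}A^j=\frac{F(R)-2\rho\kappa^2}{2F_R(R)}A_i$. Equating the coefficients of the nonzero one-form $A_i$ in \eqref{barb} and \eqref{a1} gives $\frac{R}{4}=\frac{F(R)-2\rho\kappa^2}{2F_R(R)}$, and solving for $\rho$ yields $\rho=\frac{2F(R)-RF_R(R)}{4\kappa^2}$; this is a constant because $R$, and hence $F(R)$ and $F_R(R)$, are constants. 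To obtain $p$ I would take the $g^{ij}$-trace of \eqref{a}, use $g^{ij}g_{ij}=4$ and $A^jA_j=-1$, and substitute the value just found for $\rho$; a one-line rearrangement gives $p=-\frac{2F(R)-RF_R(R)}{4\kappa^2}$, again constant (equivalently, $p=-\rho$).

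There is no real obstacle here — the computation is routine once the two expressions for $R_{ij}A^j$ are in place. The one point deserving care, and in effect the crux, is the compatibility of the two hypotheses: the geometric relation \eqref{barb} forces $A^i$ to be a Ricci eigenvector with eigenvalue $R/4$, and this is consistent with the perfect-fluid equation \eqref{a} precisely because the fluid velocity has been assumed aligned with $A^i$; without that alignment the coefficient comparison would be illegitimate. I would also make explicit that conformal flatness is being used, since \eqref{barb} is not otherwise available.
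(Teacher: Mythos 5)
Your proposal is correct and follows essentially the same route as the paper: establish $A^jR_{ij}=\tfrac{R}{4}A_i$ from conformal flatness and constancy of $R$, derive $R_{ij}A^j=\frac{F(R)-2\rho\kappa^2}{2F_R(R)}A_i$ from the reduced field equations with the fluid velocity aligned with $A^i$, equate the two to read off $\rho$, and then take the trace of the Ricci expression to obtain $p$. No gaps; this matches the paper's argument step for step.
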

\begin{remark}
The equation of state $w=-1$ is fulfilled in this case, consistent with the presently well-established $\Lambda$CDM theory. This theory is supported both theoretically and also by the plethora of observational data in recent years.
\end{remark}
\begin{theorem}
A vacuum $(GR)_4$ solution with constant $R$ in $F(R)$-gravity is only possible when $F(R)$ is a constant multiple of $R^2$. 
\end{theorem}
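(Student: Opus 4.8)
The plan is to exploit the fact that the vacuum assumption $T_{ij}=0$ collapses the reduced field equation \eqref{fr} into an Einstein condition, and then to read off an ordinary differential equation that $F$ must satisfy.

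First I would substitute $T_{ij}=0$ into \eqref{fr}. In this case $T^{\text{eff}}_{ij}=\frac{F(R)-RF_R(R)}{2\kappa^2}g_{ij}$, so the right-hand side of \eqref{fr} becomes $\frac{F(R)-RF_R(R)}{2F_R(R)}g_{ij}$; carrying the $\frac{R}{2}g_{ij}$ term across then gives
$$R_{ij}=\frac{F(R)}{2F_R(R)}g_{ij}.$$
Since $R$ is constant, the coefficient $\frac{F(R)}{2F_R(R)}$ is constant, so the spacetime is Einstein. (Here one tacitly uses $F_R(R)\neq 0$, which is already required for \eqref{fr} to make sense.)

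Next I would bring in the $(GR)_4$ and conformal-flatness hypotheses through \eqref{barb}, which reads $R_{ij}A^j=\frac{R}{4}A_i$. Contracting the Einstein relation above with $A^j$ gives $\frac{F(R)}{2F_R(R)}A_i=\frac{R}{4}A_i$, and since the $1$-form $A_i$ is nonzero by the very definition of $(GR)_4$, we may cancel it to obtain $2F(R)=RF_R(R)$. The same relation in fact also drops out of simply tracing the Einstein equation ($R=\frac{2F(R)}{F_R(R)}$), so the conformal-flat input is not strictly indispensable at this step.

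Finally I would regard $2F(R)=RF_R(R)$ as the differential equation $RF'(R)=2F(R)$ for the unknown function $F$; separating variables yields $F(R)=cR^2$ for a constant $c$, which is the claim. The one point that deserves care — and which I expect to be the main conceptual obstacle rather than a computational one — is the passage from an identity that a priori holds only at the (constant) on-shell value of $R$ to a statement about the functional form of $F$ on an interval; the resolution is the standard viewpoint in $F(R)$-gravity that the field equations constrain the Lagrangian function itself, so $RF_R(R)=2F(R)$ must hold identically, after which the ODE argument finishes the proof. One may also note, as a consistency check, that $F(R)=cR^2$ is precisely the scale-invariant choice for which every constant-$R$ Einstein metric is automatically a vacuum solution.
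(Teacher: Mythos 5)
Your proposal is correct and takes essentially the same route as the paper, whose entire proof is the trace of the vacuum field equation, $RF_R(R)=2F(R)$, integrated to give $F(R)=\lambda R^2$. The detour through (\ref{barb}) is, as you note yourself, dispensable, and the caveat you raise about promoting an on-shell relation at a single constant value of $R$ to a functional ODE for $F$ is a genuine subtlety that the paper silently glosses over in exactly the same way.
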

\begin{proof}
For the vacuum case, $T_{ij}=0$ in (\ref{fr}), the trace equation gives $RF_R(R)=2F(R)$ which on integration gives $F(R)=\lambda R^2$ for integrating constant $\lambda$.
\end{proof}
\section{\textbf{Energy conditions in a $(GR)_4$}}\label{sec-4}

In the standard theory of gravity and the modified gravity theories, energy conditions acts as a filtration system to constraint the (effective) stress-energy tensor. For the current study of modified $F(R)$- theories of gravity, we first deduce the effective pressure $p^{\text{eff}}=-\frac{RF_R(R)}{4k^2}$ and the effective energy density $\rho^{\text{eff}}=\frac{RF_R(R)}{4k^2}$ from (\ref{fr}) to investigate some energy conditions as follows:
\begin{itemize}
\item Null energy condition \textbf{(NEC)}: $\rho^{\text{eff}}+p^{\text{eff}}\geq 0$. Hence the NEC is always satisfied in the present setting. 

\item Weak energy condition \textbf{(WEC)}: $\rho^{\text{eff}}\geq 0$ and $\rho^{\text{eff}}+p^{\text{eff}}\geq 0$. Hence, WEC is satisfied in the present setting if $RF_R(R)\geq 0$. Since $F_R(R)> 0$, it implies that $R\geq 0$.

\item Dominant energy condition \textbf{(DEC)}: $\rho^{\text{eff}}\pm p^{\text{eff}}\geq 0$. In the present setting, DEC is only satisfied if $2 RF_R(R)\geq 0,$ which implies a non-negative $R$ as in the previous case of WEC, since $F_R(R)>0$.  

\item 
Strong energy condition \textbf{(SEC)}: $\rho^{\text{eff}}+3p^{\text{eff}}\geq 0$. In the present setting this is only satisfied if $RF_R(R)\leq 0$. Since $F_R(R)>0$, it implies $R\leq 0$.  
\end{itemize}
However, recent observational data of accelerating inflation of the universe strongly disfavour the SEC on cosmological scales. Therefore, we propose a positive Ricci scalar $R$ in the present study.


\section{\textbf{Analysis of $F(R)$-gravity models in $(GR)_4$}}\label{sec-5}


In this section we analyse two different models of $F(R)$-gravity theories to analyse our result.
\\

\textbf{Case:I}  
$F(R) = R+\alpha R^m $ , $\alpha $ and $ m $ are constant. Some of the polynomial models are studied in \cite{Santos}. In this case, the effective stress-energy tensor reduces to,
\\
\begin{equation}
T^{\text{eff}}_{ij} = T_{ij} + \frac{1}{2 \kappa^2} (1-m) \alpha R^m g_{ij},
\end{equation}
\\
which gives
\\
\begin{equation}
  \rho^{\text{eff}} = \frac{R+ m \alpha R^m}{4 \kappa^2} 
\end{equation}
and 
\begin{equation}
 p^{\text{eff}} = - \frac{R+ m \alpha R^m}{4 \kappa^2}. 
\end{equation}
\\
The equation of state parameter (EoS) in this case reads as $\omega = -1 $ for any $\alpha$ and $m$, that is, the universe is dominated by cosmological constant and this model is consistent with the presently well established $\Lambda CDM$ theory. As WEC is a combination of NEC and positive density and NEC is always zero in the present setting, so we observe behaviour of SEC, DEC and density parameter.

\begin{figure}[h]
{\includegraphics[scale=0.5]{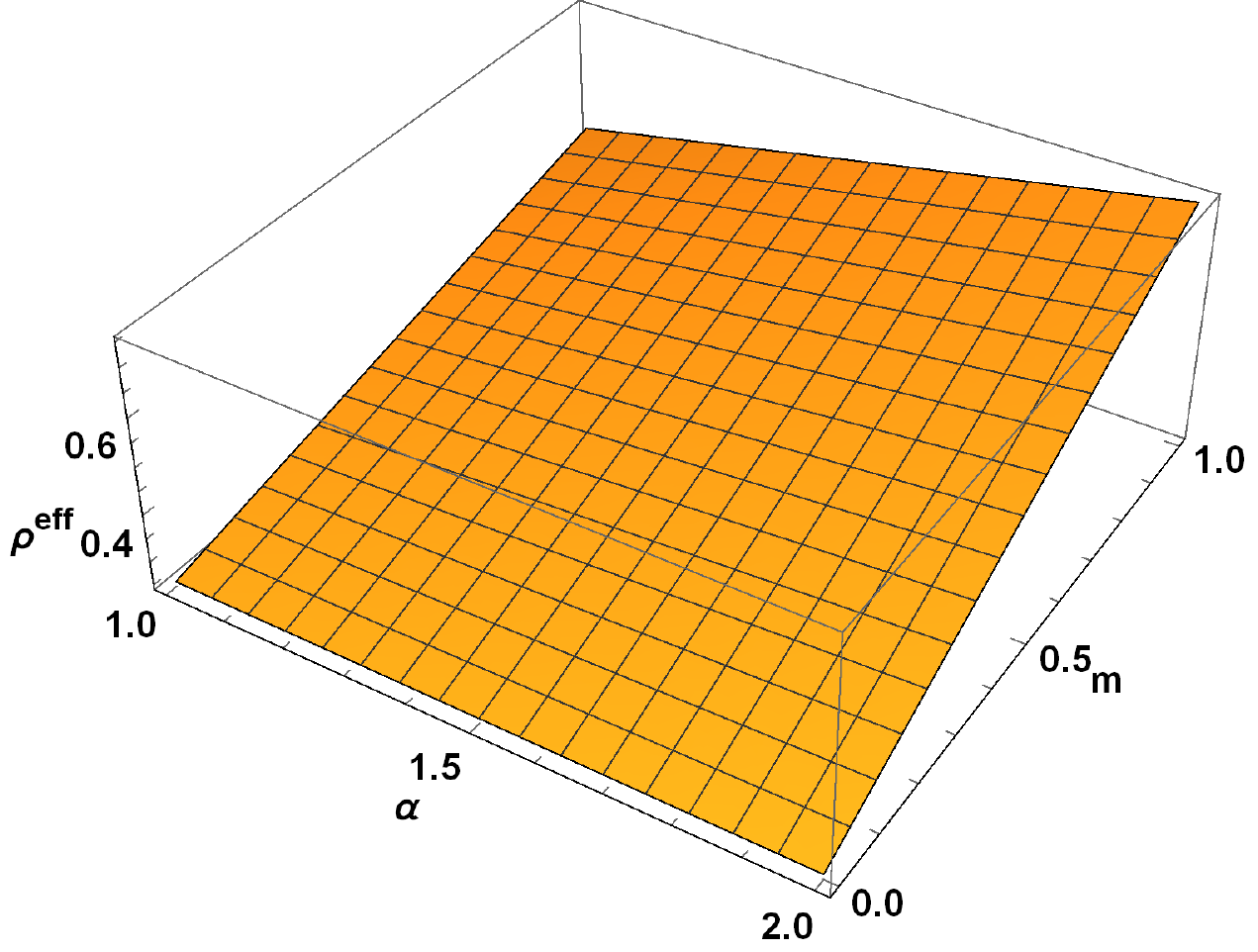}}
{\includegraphics[scale=0.5]{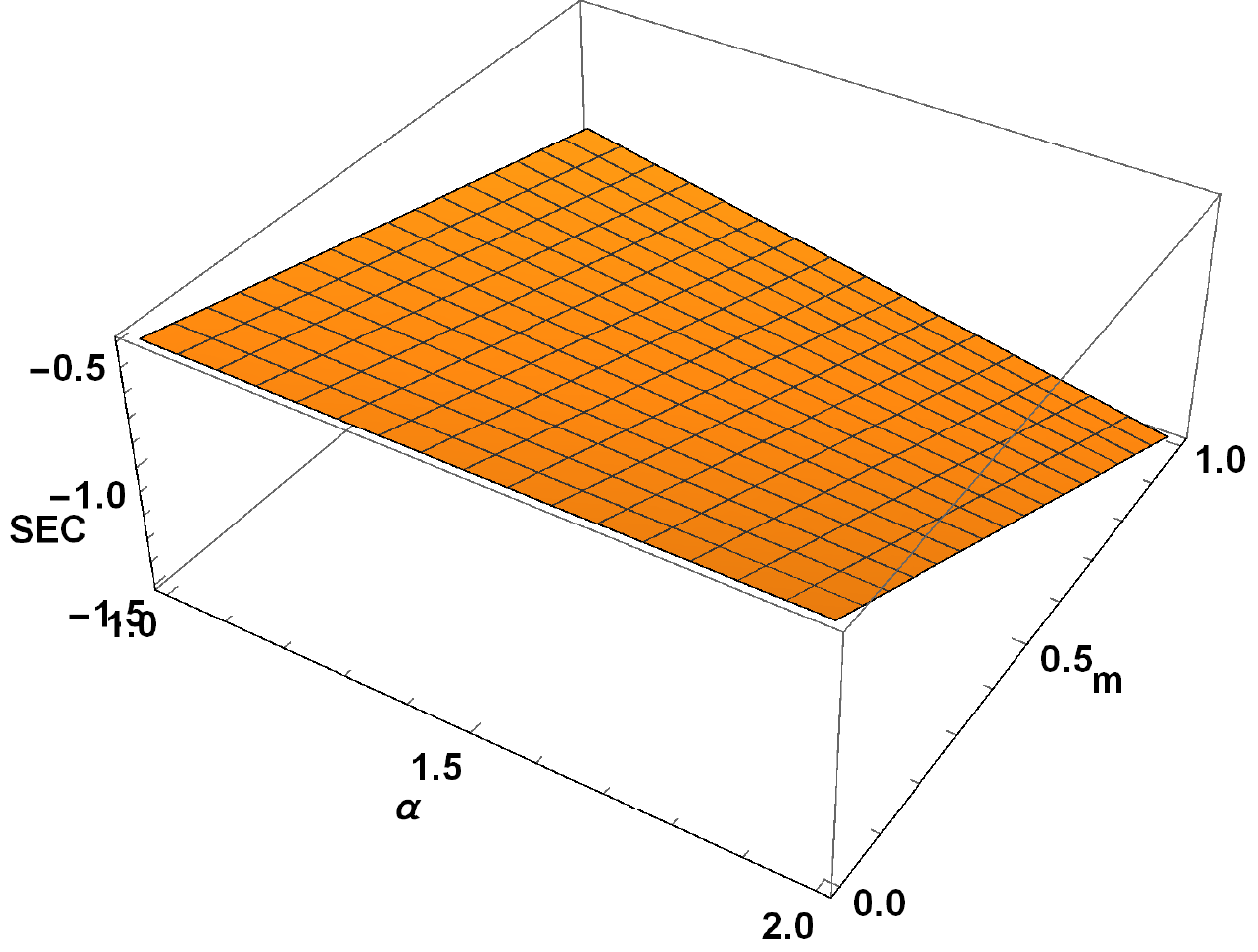}}
{\includegraphics[scale=0.5]{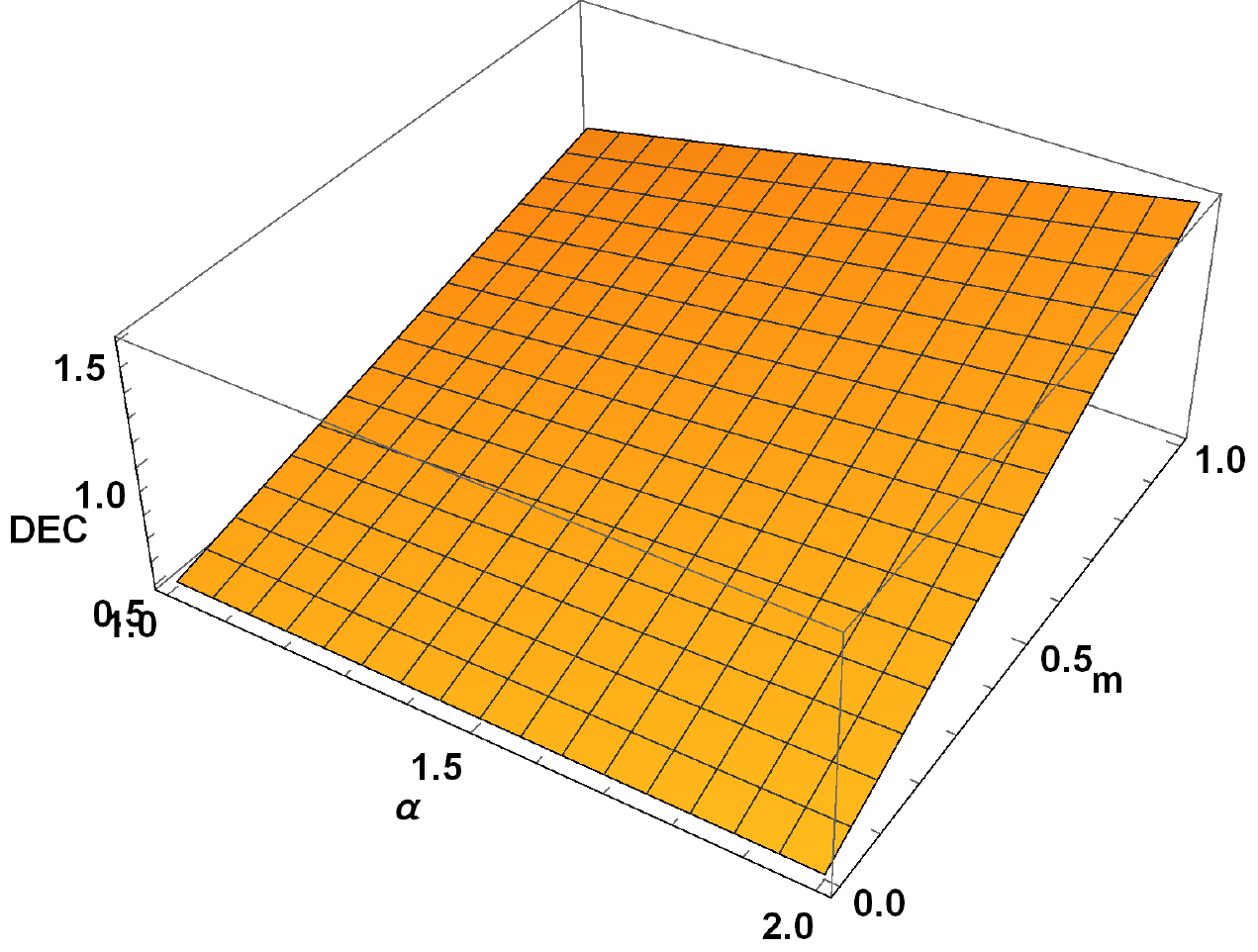}}
\caption{Energy conditions for $F(R)=R+ \alpha R^m $ with $1\leq \alpha \leq 2$,   $ 0 \leq m \leq 1$ and $R=1$.}\label{fig-1}
\end{figure}

In these figures [\ref{fig-1}]  we plot the density parameter, SEC and DEC with respect to $\alpha$ and $ m $ taking $R=1 $, $ 1 \leq \alpha \leq 2 $ and $ 0 \leq m \leq 1$.\\

\newpage
 We observe the behavior of all energy conditions and density parameter for $R>0$ , the summary of the results obtained is given below.
\begin{table}[h!]
\caption{Constraints on model parameters $\: m \:$ and $ \: \alpha \: $ to satisfy different energy conditions and violating SEC. }
\begin{center}
\begin{tabular}{c|c|c}
\toprule
\textbf{S. no.} & \textbf{Terms} & \textbf{Results} \\
\midrule
\multirow{2}{0.5em}{1} &  & for $m\in$ $(0,\infty), \alpha \in (0,\infty) $ \\
&   $\rho^{eff} >0 $ &  and for  $ m\in (- \infty,0), \alpha \in  (-\infty,0)$ \\ 
& & and for $ m=0$, $\alpha \epsilon (-\infty,\infty) $\\
\hline
\multirow{2}{0.5em}{2} &  & for $m\in$ $(0,\infty), \alpha \in (0,\infty) $ \\
&  WEC, DEC$ >0$ and SEC$<0$  &  and for  $ m\in (- \infty,0), \alpha \in  (-\infty,0)$  \\ 
&  & and for $ m=0$, $\alpha \epsilon (-\infty,\infty) $\\
\hline
3 &  { NEC = 0} & for $m\in$ $(-\infty,\infty), \alpha \in (-\infty,\infty) $  \\
\bottomrule 
\end{tabular}
\end{center}  
\label{table-1}
\end{table}

Clearly WEC and DEC satisfy the condition of positivity while SEC violates it, which imply the accelerated expansion of the universe.
\\

\textbf{Case:II}
$F(R)= R+\beta R ln(R) $, $ \beta $ is constant. Some of the logarithmic models are studied in \cite{Micol}. In this case, the effective energy momentum reduces to,
\\
\begin{equation}
T^{\text{eff}}_{ij} = T_{ij} - \frac{\beta R}{2 \kappa^2} g_{ij} 
\end{equation}
\\
which gives,
\\
\begin{equation}
\rho^{\text{eff}} = \frac{R(1+\beta+\beta ln(R))}{4 \kappa^2}
\end{equation}
and 
\begin{equation}
 p^{\text{eff}} = - \frac{R(1+\beta+\beta ln(R)) }{4 \kappa^2}. 
\end{equation}
\\
The equation of state parameter (EoS) reads as $\omega = -1 $ for any $\beta$ and $ R>0 $ and as previous section, NEC is always zero in the present setting, so we observe behavior of SEC, DEC and density parameter.
The summary of the results obtained from analysis of all energy conditions and density parameter is given below.
\begin{table}[h!]
\caption{Constraints on model parameter $\: \beta \:$ and $\:R\:$ to satisfy different energy conditions and violating SEC. }
\begin{center}
\begin{tabular}{c|c|c}
\toprule
\textbf{S. no.} & \textbf{Terms} & \textbf{Results} \\
\midrule
\multirow{2}{0.5em}{1} &  & for $R\in$ $(0,\frac{1}{e}), \beta \in (-\infty,- \frac{1}{1+lnR}) $ \\ 
&  $\rho^{eff} >0 $  & \\
& & and for  $ R \in (\frac{1}{e},\infty), \beta \in  (-\frac{1}{1+lnR},\infty)$ \\ 
& & \\
& & and for $ R=\frac{1}{e}$, $\beta \in (-\infty,\infty) $\\ 
& & \\
\hline
\multirow{2}{0.5em}{2} &  & for $R\in$ $(0,\frac{1}{e}), \beta \in (-\infty,- \frac{1}{1+lnR}) $ \\
&  WEC, DEC$ >0$ and SEC$<0$ & \\
& & and for  $ R \in (\frac{1}{e},\infty), \beta \in  (-\frac{1}{1+lnR},\infty)$ \\ 
& & \\
& & and for $ R=\frac{1}{e}$, $\beta \in (-\infty,\infty) $\\
& & \\
\hline
3 &  { NEC = 0} & for any $R>0, \beta \in (-\infty,\infty) $  \\
\bottomrule 
\end{tabular}
\end{center}
\label{table-2}
\end{table}
\\ 

\newpage
\begin{figure}
{\includegraphics[scale=0.5]{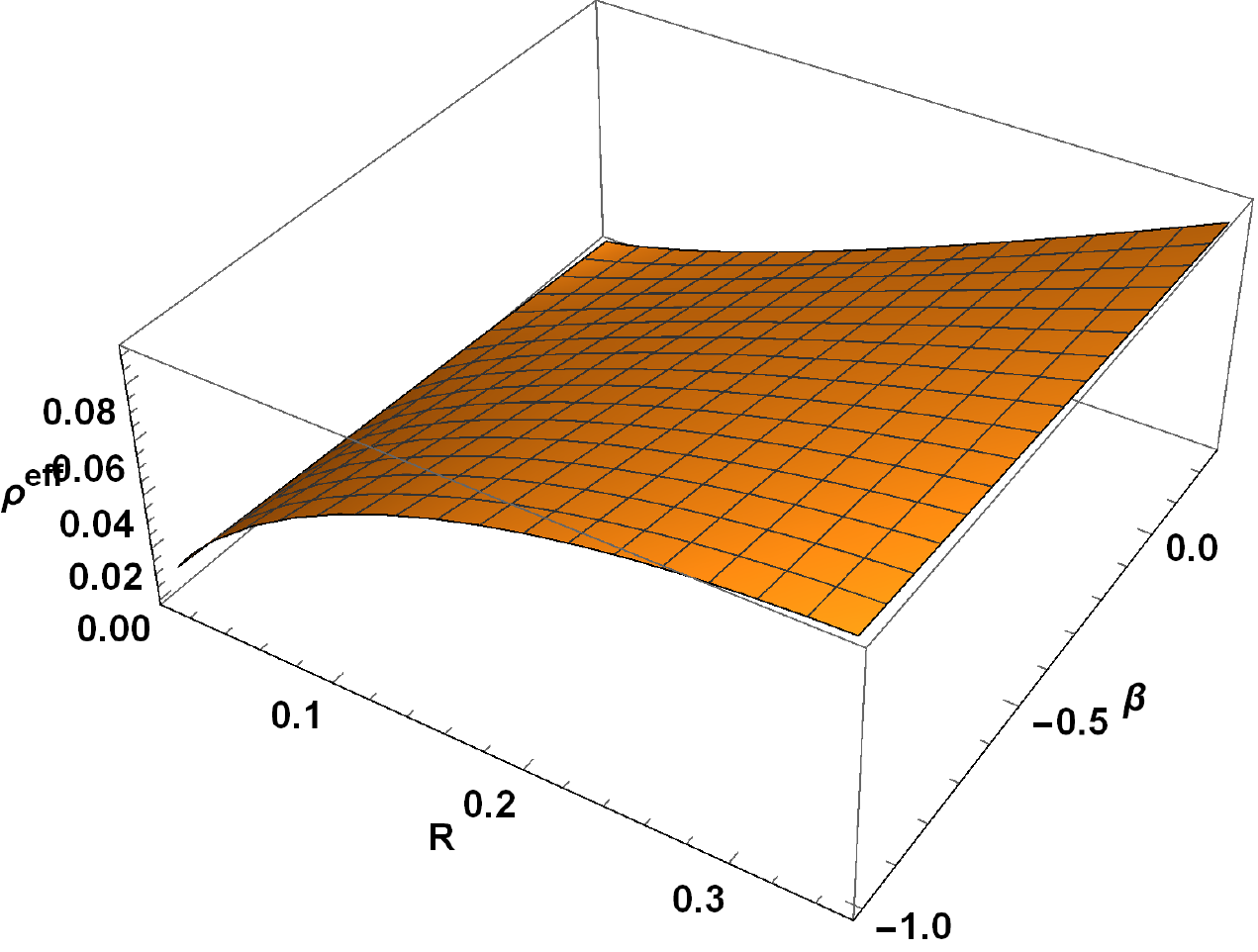}}
{\includegraphics[scale=0.5]{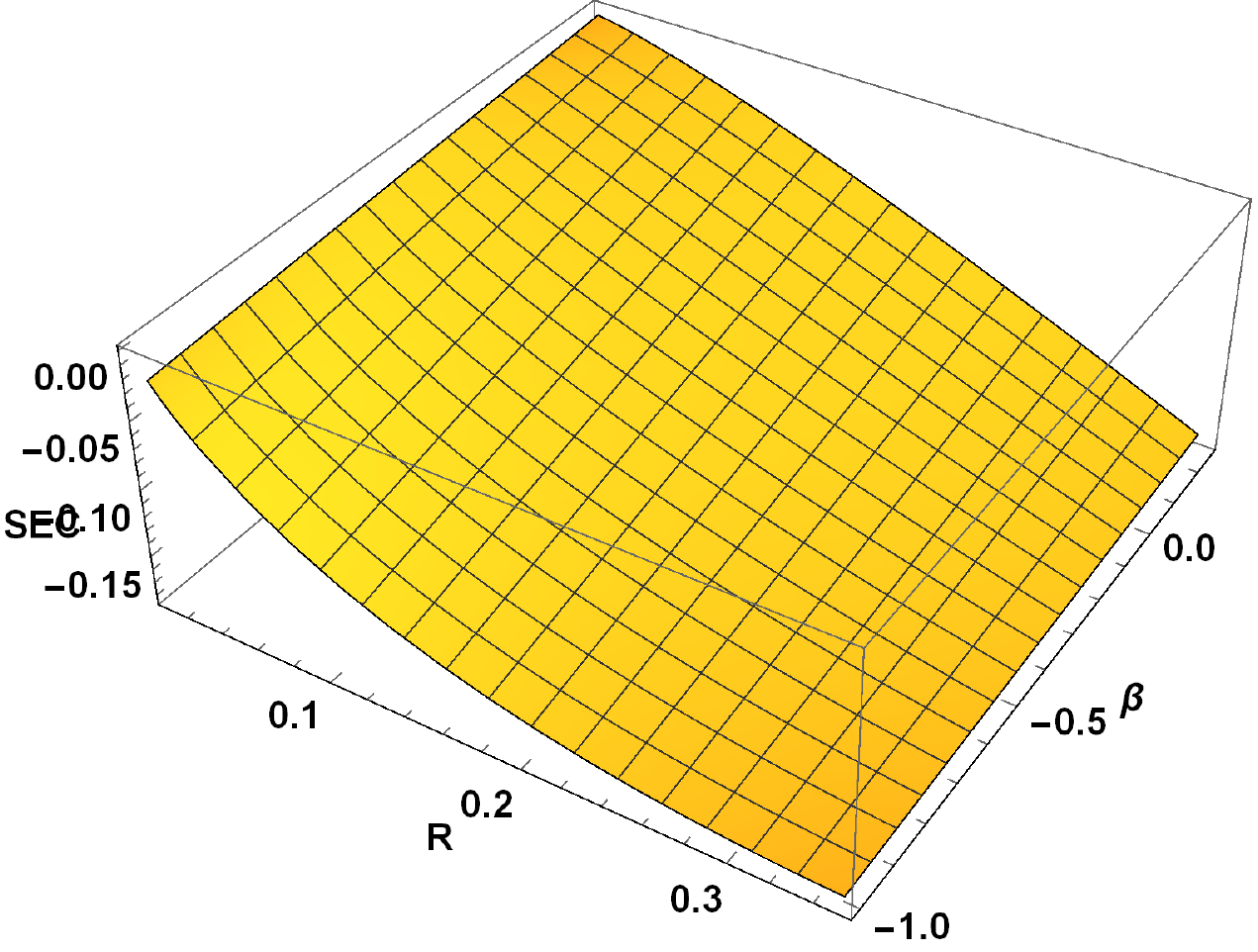}}
{\includegraphics[scale=0.5]{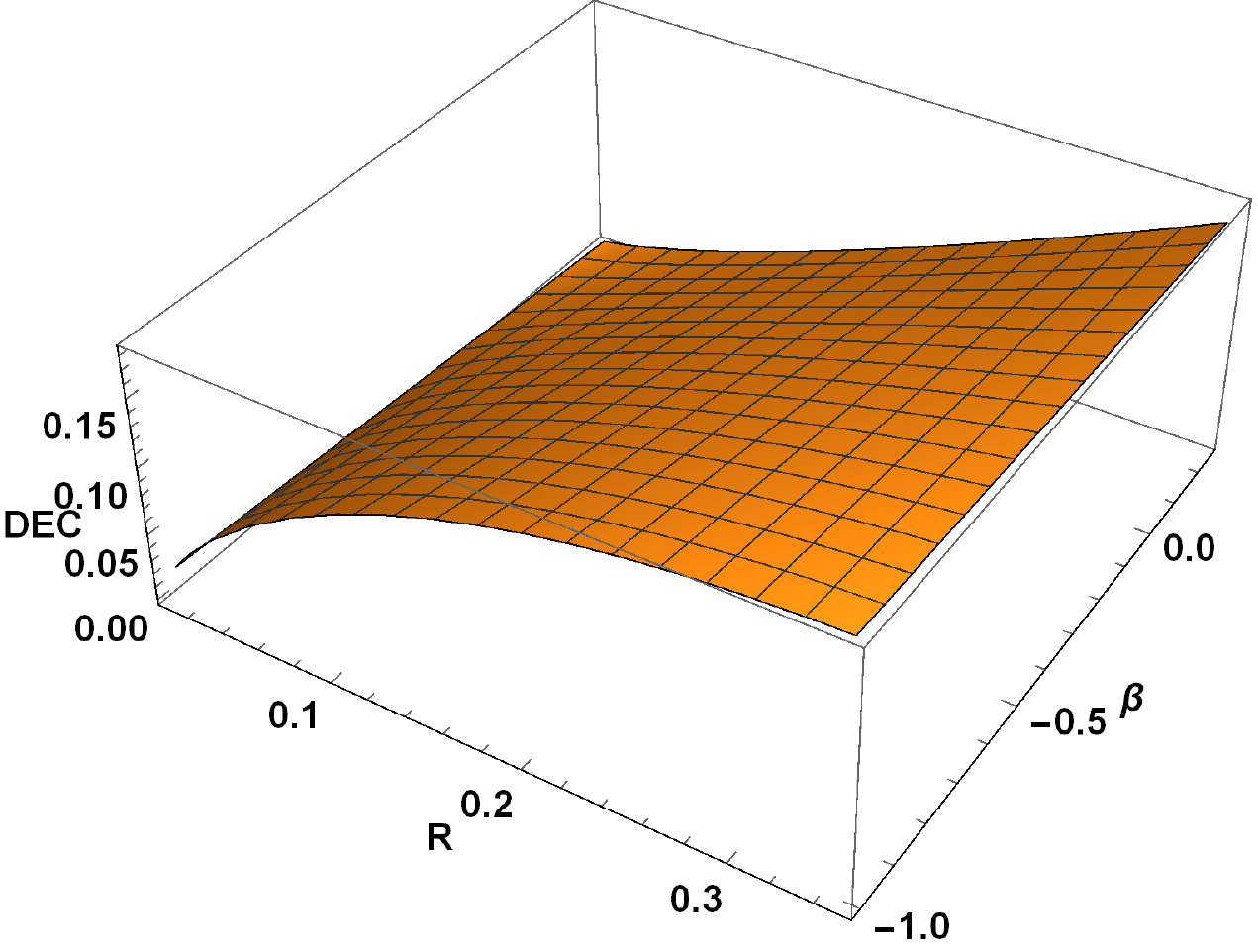}}
\caption{Energy conditions for $F(R)=R+ \beta ln (R) $ with $-1\leq \beta \leq 0.5$ and   $ 0.01 \leq R \leq 0.35$.}\label{fig-2}
\end{figure}

The above figures [\ref{fig-2}] are plotted by taking a range of $ 0.01\leq R \leq 0.35 $ and $ -1 \leq \beta \leq 0.25 $. From the figure \ref{fig-2} it is clear that WEC and DEC satisfies the  condition of positivity while SEC violates it and NEC is always zero. The violation of SEC complies with the accelerated expansion of the universe which is compatible with recent observational studies.
\\

\section{\textbf{Discussion}}\label{sec-6}


The different energy conditions which can be derived from the well known Raychaudhuri equation plays an important role to define self-consistency of modified theories of gravity. In this paper we examine the null, weak, dominant and strong energy conditions for the $F(R)$-gravity under conformally flat generalized Ricci recurrent perfect fluid spacetime with constant Ricci scalar. In two different $F(R)$ models, $F(R)= R+\alpha R^m $ where $\alpha, m$ are constants and $ F(R)=R+\beta RlnR $ where $\beta $ is constant, we investigate various energy conditions. The results from section \ref{sec-4} are used to constrain the model parameters in these $F(R)$ models. $R>0$ and constraints on the other parameters are given in the tables of corresponding models. We illustrate that density parameter and DEC fulfill their requirements of positivity when $ 1 \leq \alpha \leq 2 $, $ 0 \leq m \leq 1 $, $R=1$ and $ -1 \leq \beta \leq 0.25 $ , $ 0.01 \leq R \leq 0.35 $. The strong energy condition violates its condition of non-negativity and shows the negative behaviour in both the cases with the given constraints on model parameters and which favours the accelerated expansion scenario of the universe.

\end{document}